\newcommand\cA{{\mathcal A}}
\newcommand\cF{{\mathcal F}}
\newcommand\cI{{\mathcal I}}
\newcommand\cM{{\mathcal M}}
\newcommand\cD{{\mathcal D}}
\newcommand\cP{{\mathcal P}}
\newcommand\cZ{{\mathcal Z}}
\newcommand\cW{{\mathcal W}}
\newcommand\E{{\mathbb E}}
\newcommand\1{{\mathbb 1}}
\newcommand\e{{\varepsilon}}
\newcommand\dP{{\mathds{P}}}
\newcommand{\sgn}{\mathop{\text{sign}}}
\newtheorem{theorem}{Theorem}[section]
\newtheorem{proposition}{Proposition}[section]
\theoremstyle{definition}
\newtheorem*{definition}{Definition}
\DeclareMathOperator*{\argmax}{arg\,max}
\title{Implicit Numerical Scheme for the Hamilton--Jacobi--Bellman Quasi-Variational Inequality in the Optimal Market-Making Problem with Alpha Signal}
\author{Alexey Meteykin}
\affil{\small Lomonosov Moscow State University, Vega Institute Foundation, Moscow, Russia}
\date{}  % Remove date
\begin{document}
\maketitle

\begin{center}
    \begin{minipage}{0.9\textwidth}
        \noindent\textbf{Abstract:} We address the problem of combined stochastic and impulse control for a market maker operating in a limit order book. The problem is formulated as a Hamilton–Jacobi–Bellman quasi-variational inequality (HJBQVI). We propose an implicit time-discretization scheme coupled with a policy iteration algorithm. This approach removes time-step restrictions typical of explicit methods and ensures unconditional stability. Convergence to the unique viscosity solution is established by verifying monotonicity, stability, and consistency conditions and applying the comparison principle.
        \medskip
        
        \noindent\textbf{Keywords:} Hamilton--Jacobi--Bellman equation, combined stochastic and impulse control, implicit numerical scheme, policy iteration, viscosity solution.
    \end{minipage}
\end{center}

%-------------------------------------------
% Paper Body
%-------------------------------------------

\newpage
%--- Section ---%
\section{Introduction}

Optimal control problems play an important role in modern financial mathematics. One such problem is the optimal market making problem in a limit order book, which naturally arises in high-frequency trading on electronic exchanges.

A market maker, as a market participant, can submit two types of orders:
\begin{itemize}
    \item Limit order --- an instruction to buy or sell an asset at a given or more favorable price. Such an order is posted to the limit order book and is executed when it matches an opposite market order.
    \item Market order --- an instruction to immediately buy or sell an asset at the best available price in the limit order book. Market orders enable rapid position adjustments but are executed at less favorable prices compared to limit orders.
\end{itemize}

The market maker provides liquidity to the market by placing limit buy and sell orders, and, when necessary, uses market orders to manage risk. The profit of the market maker arises from the spread between the buy and sell prices. The main source of risk is the potential loss in inventory value due to adverse price movements, known as inventory risk. To mitigate this risk, the market maker aims to maintain a near-zero inventory position.

Classical market making models, such as \cite{avellanedastoikov} and \cite{gueantlehalletapia}, assume that price dynamics contain no predictable components and that the market maker has no predictive information about future price movements. In \cite{carteawang}, this classical setup is extended by introducing an alpha signal, a stochastic process representing the predictable component of price dynamics. This generalization allows for the possibility of extracting additional profit from predictable price trends.

In \cite{carteawang}, the authors formulate the market making problem with an alpha signal, derived from the information contained in the flow of market orders, as a combined stochastic and impulse control problem. They show that the optimal market making strategy satisfies a Hamilton--Jacobi--Bellman Quasi-Variational Inequality (HJBQVI), which is solved numerically using an explicit time discretization scheme. However, the explicit scheme imposes natural restrictions on the time step due to stability constraints, which in turn increase the computational cost of the numerical solution.

In this paper, we propose an alternative numerical approach to solving the above HJBQVI. The main idea of the proposed method is to use an implicit time discretization scheme combined with a policy iteration algorithm. The implicit scheme is unconditionally stable and therefore eliminates restrictions on the time step size, while the policy iteration algorithm efficiently solves the discrete equation at each time step.

We establish convergence of the proposed implicit numerical scheme to the unique viscosity solution of the HJBQVI by applying the general framework developed in \cite{barlessouganidis}. Specifically, we verify the conditions of monotonicity, stability, and consistency, as well as the validity of the comparison principle for the original equation. The convergence of the policy iteration algorithm is justified using the results of \cite{azimzadehforsyth}, which provide sufficient conditions for convergence in terms of diagonal dominance of the associated matrices and the connectivity properties of the graph induced by the impulse control component.

%--- Section ---%
\section{Model}

Let $(\Omega, {\cF}, \bf F, \dP)$ be a stochastic basis, where the filtration ${\bf F} = \{\cF_t\}_{t\in[0,T]}$ is generated by the processes $W$, $\bar{M}^{\operatorname{a}}$, and $\bar{M}^{\operatorname{b}}$ introduced below. The market maker operates over the interval  $[0, T]$ with a fixed trading horizon $T>0$.

The midprice in the order book is defined as the average of the best bid and best ask prices. Let the midprice $S = (S_t)_{t\geq 0}$ of the asset evolve as 
\begin{equation*}
    dS_t = \sigma (dJ^\uparrow_t - dJ^\downarrow_t),
\end{equation*}
where $\sigma > 0$ denotes the minimum price increment (tick size) in the order book. Here $J^\uparrow$ and $J^\downarrow$ are doubly stochastic Poisson processes with intensities 
%$\mu^\uparrow_t$ and $\mu^\downarrow_t$ of the form 
\begin{equation*}
    \mu^\uparrow_t = \alpha_t^+ + \theta \quad \text{and} \quad \mu^\downarrow_t = \alpha_t^- + \theta,
\end{equation*}
where $\theta > 0$ is a constant and $\alpha = (\alpha_t)_{t\geq 0}$ is a process, referred to as the alpha signal and  capturing information about the directional component of price dynamics, will be specified below. 
%The operators $(\cdot)^+$ and $(\cdot)^-$ denote the positive and negative parts, respectively.

The market maker may submit both limit and market orders, all having unit size.

The control for limit sell orders is an ${\bf F}$-predictable process $l^{\operatorname{a}} =  (l_t^{\operatorname{a}})_{t \in [0, T]}$ with values in $\{0, 1\}$. A limit sell order is posted at time $t$ if $l_t^{\operatorname{a}} = 1$ and is not posted otherwise. Limit sell orders are placed at the best ask price $S_t + \Delta$, where $\Delta \geq 0$ represents half of the bid–ask spread. Analogously, the process $l^{\operatorname{b}} =  (l_t^{\operatorname{b}})_{t \in [0, T]}$ controls limit buy orders, which are placed at the best bid price $S_t - \Delta$.

Let $N^\text{a} = (N^\text{a}_t)_{t\geq0}$ and $N^\text{b} = (N^\text{b}_t)_{t\geq0}$ denote the counting processes for executed limit sell and buy orders of the market maker, respectively.

A limit sell (buy) order posted by the market maker is assumed to be executed with probability one whenever a market buy (sell) order arrives in the market.

The control governing market orders is specified by a double sequence
\begin{equation*}
    \zeta=(\tau_1, \tau_2, \dotsc; z_1, z_2, \dotsc),
\end{equation*}
where $0 \leq \tau_1 \leq \tau_2 \leq \cdots$ are ${\bf F}$-stopping times, and $z_1, z_2, \dotsc \in \{1, -1\}$ are impulses corresponding to these times. At time $\tau_i$, the market maker submits a market buy order if $z_i = 1$, and a market sell order if $z_i = -1$. Market buy orders are executed at $S_t + \Upsilon$, and market sell orders at $S_t - \Upsilon$, where $\Upsilon = \Delta + \e$ denotes the total cost of taking liquidity, with $\e > 0$ being the market order fee. Let $M^{\operatorname{a}} = (M^{\operatorname{a}}_t)_{t\geq0}$ and $M^{\operatorname{b}} = (M^{\operatorname{b}}_t)_{t\geq0}$ denote the counting processes for the market maker’s buy and sell market orders, respectively.

Other participants also submit market orders. Let $\bar{M}^{\operatorname{a}} = (\bar{M}^{\operatorname{a}}_t)_{t\geq 0}$ denote the Poisson process counting external market buy orders with intensity $\lambda^{\operatorname{a}}$, and $\bar{M}^{\operatorname{b}} = (\bar{M}^{\operatorname{b}}_t)_{t\geq 0}$ the Poisson process counting external market sell orders with intensity $\lambda^{\operatorname{b}}$.

The alpha signal $\alpha = (\alpha_t)_{t\geq 0}$ evolves as an Ornstein--Uhlenbeck process between jump times of the processes $\bar{M}_t^{\operatorname{a}}, M_t^{\operatorname{a}}, \bar{M}_t^{\operatorname{b}}, M_t^{\operatorname{b}}$:
\begin{equation*}
        d\alpha_t = -k\alpha_t dt + \rho dW_t + \gamma^{\operatorname{a}} (d\bar{M}_t^{\operatorname{a}} + dM_t^{\operatorname{a}}) - \gamma^{\operatorname{b}} (d\bar{M}_t^{\operatorname{b}} + dM_t^{\operatorname{b}}), \quad \alpha_0 = 0,
\end{equation*}
where $W = (W_t)_{t \geq 0}$ is a Brownian motion, and $k, \rho, \gamma^{\operatorname{a}}, \gamma^{\operatorname{b}} > 0$ are constants. Each market buy order arrival increases $\alpha_t$ by $\gamma^{\operatorname{a}}$, while each market sell order decreases it by $\gamma^{\operatorname{b}}$. Thus, the alpha signal reflects the imbalance between buy and sell market order flows.

We denote the market maker’s control by $\nu = (l^{\operatorname{a}}, l^{\operatorname{b}}, \zeta)$.

The controlled inventory process $Q^\nu = (Q^\nu_t)_{t\geq 0}$ is given by the relation 
\begin{equation*}
        Q^\nu_t = N^\text{b} - N^\text{a} + M^\text{a}_t - M^\text{b}_t; 
\end{equation*}
the controlled cash process $X^\nu = (X^\nu_t)_{t\geq0}$ evolves as
\begin{equation*}
        dX^\nu_t = (S_t + \Delta) dN^\text{a}_t - (S_t - \Delta) dN^\text{b}_t - (S_t + \Upsilon)dM^\text{a}_t + (S_t - \Upsilon)dM^\text{b}_t.
\end{equation*}

The set of admissible controls $\cA$ includes all $\nu = (l^{\operatorname{a}}, l^{\operatorname{b}}, \zeta)$ such that the inventory remains bounded, that is $Q^\nu_t \in [-\overline{Q}, \overline{Q}]$ for all $t \in [0, T]$ for some integer $\overline{Q} > 0$, and the impulse control $\zeta$ does not trigger simultaneous buy and sell market orders.

We define the state process $Y^\nu = \big\{(X_t^\nu, S_t, \alpha_t, Q_t^\nu)\big\}_{t \in [0, T]}$, and use the shorthand $y = (x, s, \alpha, q)$. The performance is measured by the  functional 
\begin{gather*}\label{cost_functional}
    J^\nu(t, y) = \E^{t, y}\bigg[\int_t^T f(r, Y_r^\nu)dr + g(T, Y_T^\nu)\bigg],
    \\
    f(t, y) = - \phi q^2, \quad g(t, y) = x + q(s - \Upsilon \sgn q) - \psi q^2,
\end{gather*}
where $\E^{t, y}[\cdot]$ denotes conditional expectation given $X^\nu_{t_-}=x$, $S_{t_-}=s$, $\alpha^\nu_{t_-}=\alpha$, $Q^\nu_{t_-}=q$, and $\psi, \phi > 0$ are constants.

The running cost $f(t, y) = -\phi q^2$ penalizes nonzero inventory levels, reflecting the exposure to adverse price movements. The terminal reward $g(t, y)$ represents the liquidation value at the terminal date, consisting of the current cash balance $x$ and the proceeds $q(s - \Upsilon \sgn q) - \psi q^2$ obtained upon liquidating the remaining position with a market order. The quadratic term $-\psi q^2$ accounts for additional costs due to insufficient liquidity at the best available quote in the order book.

The optimization problem is formulated as a combined stochastic and impulse maximization problem with the value function 
\begin{equation}\label{optimization_problem}
        u(t, y) = \sup_{\nu \in \cA} J^\nu(t, y).
\end{equation}

The dimension of the problem can be reduced to three variables by applying the substitution
\begin{equation}\label{substitution}
    u(t,x,s,\alpha,q)=x+qs+v(t,\alpha,q),
\end{equation}
where the value function $u(t,x,s,\alpha,q)$ decomposes into three components: the accumulated cash $x$, the current mark-to-market value $qs$ of the inventory, and the residual function $v(t,\alpha,q)$ representing the expected additional profit generated on $[t, T]$ under the optimal strategy.

As shown in \cite{carteawang}, the value function (\ref{optimization_problem}) is the unique viscosity solution to the corresponding Hamilton--Jacobi--Bellman Quasi-Variational Inequality (HJBQVI), which, under substitution (\ref{substitution}), takes the form
\begin{equation}\label{hjb}
    \begin{cases}
        \max\Bigg(
        \sup\limits_{l_t^{\operatorname{a}}, l_t^{\operatorname{b}} \in \{0, 1\}} \Big(\frac{\partial v}{\partial t} + L^{l^{\operatorname{a}}, l^{\operatorname{b}}} v + \Tilde{f}^{l^{\operatorname{a}}, l^{\operatorname{b}}} \Big),
        \sup\limits_{z\in\{1, -1\}} \Big( \cM^z v - v \Big)
        \Bigg) = 0, & t \in [0, T),
        \\
        v(T, \alpha, q\big) = \Tilde{g}(T, \alpha, q\big),
    \end{cases}
\end{equation}
where
\begin{align*}
L^{l^{\operatorname{a}}, l^{\operatorname{b}}} v (t, \alpha, q\big) &= - k\alpha\frac{\partial v}{\partial \alpha}(t, \alpha, q\big) +\frac12 \rho^2\frac{\partial^2 v}{\partial\alpha^2}(t, \alpha, q\big)\\
&+ \lambda^{\operatorname{a}} \bigg(\1_{\{l^{\operatorname{a}}_t=0\}} v(t, \alpha+\gamma^{\operatorname{a}},q\big) + \1_{\{l^{\operatorname{a}}_t=1\}} v(t, \alpha+\gamma^{\operatorname{a}}, q-1\big) - v(t, \alpha, q\big)
\bigg)\\
&+ \lambda^{\operatorname{b}} \bigg(\1_{\{l^{\operatorname{b}}_t=0\}} v(t, \alpha-\gamma^{\operatorname{b}},q\big) + \1_{\{l^{\operatorname{b}}_t=1\}} v(t, \alpha-\gamma^{\operatorname{b}}, q+1\big) - v(t, \alpha, q\big) \bigg),\\
\cM^z v (t, \alpha, q\big) &=
\begin{cases}
    v(t, \alpha, q+1\big) -\Upsilon, & z=1,\\
    v(t, \alpha, q-1\big) -\Upsilon, & z=-1,
\end{cases}\\
\Tilde{f}^{l^{\operatorname{a}}, l^{\operatorname{b}}}(t, \alpha, q\big) &= \alpha\sigma q - \phi q^2 + \1_{\{l^{\operatorname{a}}_t=1\}} \lambda^{\operatorname{a}} \Delta + \1_{\{l^{\operatorname{b}}_t=1\}} \lambda^{\operatorname{b}} \Delta,\\
\Tilde{g}(t, \alpha, q\big) &= -\Upsilon q \sgn q - \psi q^2.
\end{align*}

%--- Section ---%
\section{Numerical scheme}

Let $\{t_n\}_{n=0}^{N}$ be a uniform time grid with step size $\delta t > 0$ on the interval $[0, T]$, and let $\{\alpha_i\}_{i=-N_\alpha}^{N_\alpha}$ be a uniform grid with step size $\delta \alpha > 0$ on $[-A, A]$ for some $A > 0$. The inventory variable $q$ takes values in the  grid with unit spacing $\{q_j\}_{j=-N_q}^{N_q} = [-\overline{Q}, \overline{Q}] \cap \mathbb{Z}$.

Since the shifted values $\alpha + \gamma^{\operatorname{a}}$ and $\alpha - \gamma^{\operatorname{b}}$ may not coincide with grid points in $\alpha$, the value $v(t, \alpha, q)$ for $\alpha \in [-A, A] \setminus \{\alpha_i\}_{i=-N_\alpha}^{N_\alpha}$ is linearly interpolated between the nearest grid points. For $\alpha > A$, the value $v(t, \alpha, q)$ is linearly extrapolated using $v(t, A - \delta\alpha, q)$ and $v(t, A, q)$, and analogously for $\alpha < -A$. In equation (\ref{hjb}), the process $\alpha$ is replaced with its truncated version
\begin{equation*}
    \overline{\alpha}_t = \min\{A, \max\{-A, \alpha_t\}\}.
\end{equation*}

The partial derivatives are approximated by finite differences as follows:
\begin{align*}
    \frac{\partial v}{\partial t}(t, \alpha, q) &\sim \frac{v(t+\delta t, \alpha, q) - v(t, \alpha, q)}{\delta t},
    \\
    -k\alpha\frac{\partial v}{\partial \alpha}(t, \alpha, q) &\sim
    \begin{cases}
        -k\alpha\frac{v(t, \alpha + \delta \alpha, q) - v(t, \alpha, q)}{\delta \alpha}, &\alpha \leq 0,\\
        -k\alpha\frac{v(t, \alpha, q) - v(t, \alpha - \delta \alpha, q)}{\delta \alpha}, &\alpha \geq 0,
    \end{cases}
    \\
    \frac{\partial^2 v}{\partial\alpha^2}(t, \alpha, q) &\sim \frac{v(t, \alpha + \delta \alpha, q) - 2v(t, \alpha, q) + v(t, \alpha - \delta \alpha, q)}{\delta \alpha^2}.
\end{align*}

The boundary condition on the second derivative with respect to $\alpha$ is given by
\begin{equation*}
    \frac{\partial^2 v}{\partial\alpha^2}(t, -A, q) = 0 \quad \text{and} \quad \frac{\partial^2 v}{\partial\alpha^2}(t, A, q) = 0.
\end{equation*}

For convenience, denote $l^{\operatorname{a}}_n = l^{\operatorname{a}}_{t_n}$, $l^{\operatorname{b}}_n = l^{\operatorname{b}}_{t_n}$, $v^n=v(t_n, \alpha, q)$, and $\Tilde{f}^{l^{\operatorname{a}}, l^{\operatorname{b}}, n}=\Tilde{f}^{l^{\operatorname{a}}, l^{\operatorname{b}}}(t_n, \alpha, q)$.

The discrete form of the HJBQVI (\ref{hjb}) reads: 
\begin{equation}\label{scheme}
    \begin{cases}
        \max\Bigg(
        \sup\limits_{l_n^{\operatorname{a}}, l_n^{\operatorname{b}} \in \{0, 1\}} \Big(\frac{v^{n+1} - v^n}{\delta t} + L_{\delta}^{l^{\operatorname{a}}, l^{\operatorname{b}}} v^n + \Tilde{f}^{l^{\operatorname{a}}, l^{\operatorname{b}}, n} \Big),
        \sup\limits_{z\in\{1, -1\}} \Big( B_{\delta}^z v^n - \Upsilon \Big)
        \Bigg) = 0, & n < N,
        \\
        v(T, \alpha, q\big) = \Tilde{g}(T, \alpha, q\big),
    \end{cases}
\end{equation}
where
\begin{align*}
    L_{\delta}^{l^{\operatorname{a}}, l^{\operatorname{b}}} v(t, \alpha, q) &= k \overline{\alpha}_- \frac{v(t, \alpha + \delta \alpha, q) - v(t, \alpha, q)}{\delta \alpha} - k \overline{\alpha}_+ \frac{v(t, \alpha, q) - v(t, \alpha - \delta \alpha, q)}{\delta \alpha}
    \\
    &+ \frac{\rho^2}2 \frac{v(t, \alpha + \delta \alpha, q) - 2 v(t, \alpha, q) + v(t, \alpha - \delta \alpha, q)}{\delta \alpha^2}
    \\
    &+ \lambda^{\operatorname{a}} \bigg(\1_{\{l^{\operatorname{a}}_n=0\}}\cI^+ v(t, \alpha, q) + \1_{\{l^{\operatorname{a}}_n=1\}} \cI^+ v(t, \alpha, q - 1) - v(t, \alpha, q) \bigg)
    \\
    &+ \lambda^{\operatorname{b}} \bigg(\1_{\{l^{\operatorname{b}}_n=0\}}\cI^- v(t, \alpha, q) + \1_{\{l^{\operatorname{b}}_n=1\}} \cI^- v(t, \alpha, q + 1) - v(t, \alpha, q) \bigg),
    \\
    B_{\delta}^z v(t, \alpha, q) &= 
    \begin{cases}
        v(t, \alpha, q+1) -v(t, \alpha, q), & z=1,\\
        v(t, \alpha, q-1) -v(t, \alpha, q), & z=-1.
    \end{cases}
\end{align*}

The operators $\cI^+$ and $\cI^-$ perform linear interpolation with respect to $\alpha$:
\begin{align*}
    \begin{split}
        \cI^+ v(t, \alpha, q) & = v \bigg(t, \alpha + \bigg \lfloor \frac{\gamma^{\operatorname{a}}}{\delta \alpha}\bigg\rfloor \delta \alpha, q \bigg) \\
         & + \bigg( \frac{\gamma^{\operatorname{a}}}{\delta \alpha} - \bigg \lfloor \frac{\gamma^{\operatorname{a}}}{\delta \alpha}\bigg\rfloor \bigg) \Bigg( v \bigg( t, \alpha + \bigg \lceil \frac{\gamma^{\operatorname{a}}}{\delta \alpha}\bigg\rceil \delta \alpha , q \bigg) - v \bigg( t, \alpha + \bigg \lfloor \frac{\gamma^{\operatorname{a}}}{\delta \alpha}\bigg\rfloor \delta \alpha , q \bigg)\Bigg),
    \end{split}
    \\
    \begin{split}
        \cI^- v(t, \alpha, q) & = v \bigg(t, \alpha - \bigg \lfloor \frac{\gamma^{\operatorname{b}}}{\delta \alpha}\bigg\rfloor \delta \alpha, q \bigg) \\
        & + \bigg( \frac{\gamma^{\operatorname{b}}}{\delta \alpha} - \bigg \lfloor \frac{\gamma^{\operatorname{b}}}{\delta \alpha}\bigg\rfloor \bigg) \Bigg( v \bigg( t, \alpha - \bigg \lceil \frac{\gamma^{\operatorname{b}}}{\delta \alpha}\bigg\rceil \delta \alpha , q \bigg) - v \bigg( t, \alpha - \bigg \lfloor \frac{\gamma^{\operatorname{b}}}{\delta \alpha}\bigg\rfloor \delta \alpha , q \bigg)\Bigg).
    \end{split}
\end{align*}

%--- Section ---%
\section{Policy iteration}

To solve the discrete equation (\ref{scheme}), we use the policy iteration algorithm. This algorithm addresses problems of the form
\begin{equation}\label{policy_iteration_problem}
    \sup_{P\in\cP}\bigg\{-A(P)v + b(P)\bigg\} = 0,
\end{equation}
where $A(P)$ is an $M \times M$ matrix, $b(P)$ and $v$ are vectors of length $M$, and $\cP$ denotes the set of admissible policies.
\begin{algorithm}
    \caption{Policy Iteration}\label{policy_iteration}
    \begin{algorithmic}[1]
        \State $r>0$ — tolerance level
        \State $v^0$ — initial guess
        \For{$k=0, 1, 2, \dots$}
            \State $P^k = \argmax\limits_{P\in\cP}\Big\{-A(P)v^k + b(P)\Big\}$
            \State Solve the linear system $A(P^k)v^{k+1}=b(P^k)$
            \If{$\max\limits_i\bigg|\dfrac{v^{k+1}_i - v^k_i}{v^{k+1}_i}\bigg|<r$}
                \State break
            \EndIf
        \EndFor
    \end{algorithmic}
\end{algorithm}

The terminal condition $v(T, \alpha, q) = \Tilde{g}(T, \alpha, q)$ of equation (\ref{scheme}) is specified at $t = T$. Proceeding backward in time, for each $n \in \{N-1, N-2, \dots, 1\}$, the solution $v^n$ is obtained by solving problem (\ref{policy_iteration_problem}) with the corresponding matrices $A(P)$ and vectors $b(P)$.

Fix $n \in \{N-1, N-2, \dots, 1\}$ and set $M = (2N_\alpha + 1)(2N_q + 1)$.

The set of admissible policies $\cP$ in the present problem is given by
\begin{equation*}
    \cP = \cW \times \cZ \times \cD,
\end{equation*}
where
\begin{equation*}
    \cW \subset \prod_{i=1}^M \Big( \{0, 1\}\times\{0, 1\} \Big),
    \quad
    \cZ \subset \prod_{i=1}^M\{1, -1\},
    \quad
    \cD = \prod_{i=1}^M\{0, 1\}.
\end{equation*}

Therefore, a policy $P = (w, z, d) \in \cP$ consists of three components. Namely,  the vector $w=(w_1, \dots, w_M)\in\cW$ corresponds to the stochastic control $(l^{\operatorname{a}}_n, l^{\operatorname{b}}_n)$ at each grid point, the vector $z=(z_1, \dots, z_M)\in\cZ$ represents the impulses, and the components of the vector $d=(d_1, \dots, d_M)\in\cD$ are indicators of impulse application. Let $D$ denote the diagonal matrix with $d=(d_1, \dots, d_M)$ on the diagonal.

To express equation (\ref{scheme}) in the form (\ref{policy_iteration_problem}), we write $A(P)$ and $b(P)$ as
\begin{align*}
    A(P) &=
        \big(I - D\big)\Big(I - L(w) \Big) + D\Big(I - B(z) \Big),
    \\
    b(P) &=
        \big(I - D\big)c(w) + Dk(z),
\end{align*}
where
\begin{align*}
    L(w) &= L_{\delta}^{l^{\operatorname{a}}, l^{\operatorname{b}}} \delta t,
    &
    c(w) &= v^{n+1} + \Tilde{f}^{l^{\operatorname{a}}, l^{\operatorname{b}}, n} \delta t,
    \\
    B(z) &= I + B_{\delta}^z,
    &
    k(z) &= -\Upsilon.
\end{align*}

To prove convergence of the policy iteration algorithm to the unique solution of equation (\ref{scheme}), we use the theorem from \cite{azimzadehforsyth}. For this purpose, several definitions are introduced below.

Let $A=(a_{ij})\in\mathbb{R}^{M\times M}$ be a real matrix.
\begin{definition}
The graph of a matrix $A$ is a graph with vertices $\{1, \dots, M\}$, where vertices $i$ and $j$ are connected by an edge if $a_{ij}\neq 0$.
\end{definition}
\begin{definition}
A matrix $A$ is a $Z$-matrix if $a_{ij}\leq 0$ for all $i\neq j$.
\end{definition}
\begin{definition}
A matrix $A$ is strictly (weakly) diagonally dominant if $|a_{ii}| > \sum_{j\neq i}|a_{ij}|$ ($|a_{ii}|\geq \sum_{j\neq i}|a_{ij}|$) for all $i$.
\end{definition}

\begin{theorem}[Convergence of policy iteration]\label{policy_iteration_convergence}
Assume that the following conditions hold:
\begin{enumerate}
    \item $P \mapsto A(P)^{-1}$ is bounded.
    \item $A$ and $b$ are bounded, and for every $x \in \mathbb{R}^M$ there exists a policy $P_x\in\cP$ such that $-A(P_x)x + b(P_x) = \sup_{P\in\cP}\{-A(P)x + b(P)\}$.
    \item For each $P=(w, z, d) \in \cP$ and vertex $i$ with $d_i=1$, there exists a path in the graph of the matrix $B(z)$ from $i$ to a vertex $j$ with $d_j=0$.
    \item For each $P=(w, z, d) \in \cP$, the matrices $I-L(w)$ and $I-B(z)$ are $Z$-matrices with nonnegative diagonal elements. The matrix $I-L(w)$ is strictly diagonally dominant, and the matrix $I-B(z)$ is weakly diagonally dominant.
\end{enumerate}
Then the sequence $(v^k)_{k=0}^{\infty}$ produced by the policy iteration algorithm (Algorithm \ref{policy_iteration}) is nondecreasing and converges to the unique solution $v$ of problem (\ref{policy_iteration_problem}). Moreover, if $\cP$ is finite, convergence occurs in at most $|\cP|$ iterations ($v^{|\cP|}=v^{|\cP|+1}=\cdots$).
\end{theorem}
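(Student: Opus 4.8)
The statement is a specialization to the present setting of the convergence theorem for policy iteration of \cite{azimzadehforsyth}, so the plan is to reproduce that argument with the concrete operators $L(w)$, $B(z)$, $D$ in mind. The proof has five parts, carried out in the following order: (i) show that every $A(P)$ is a nonsingular $M$-matrix, so $A(P)^{-1}\ge 0$ entrywise; (ii) deduce that the iterates $(v^k)$ are nondecreasing; (iii) use boundedness to obtain convergence to some $v^*$; (iv) identify $v^*$ as a solution of \eqref{policy_iteration_problem}; (v) prove uniqueness and, when $\cP$ is finite, finite termination. Throughout, the partial order $\le$ on $\mathbb{R}^M$ is componentwise.

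For (i) I would use the block structure $A(P)=(I-D)(I-L(w))+D(I-B(z))$: since $D$ is diagonal, row $i$ of $A(P)$ equals row $i$ of $I-L(w)$ when $d_i=0$ and row $i$ of $I-B(z)$ when $d_i=1$. By condition 4, the $d_i=0$ rows are strictly diagonally dominant $Z$-rows (hence have positive diagonal), and the $d_i=1$ rows are weakly diagonally dominant $Z$-rows with positive diagonal. The edge set of the graph of $B(z)$ is contained in that of $A(P)$ on the $d_i=1$ rows, so condition 3 says that from every weakly dominant row of $A(P)$ there is a path in the graph of $A(P)$ to a strictly dominant row. Thus $A(P)$ is a weakly chained diagonally dominant $Z$-matrix with positive diagonal, and the standard characterization of such matrices gives that $A(P)$ is a nonsingular $M$-matrix, whence $A(P)^{-1}\ge 0$. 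For (ii), the choice $P^k=\argmax_{P}\{-A(P)v^k+b(P)\}$ together with $A(P^{k-1})v^k=b(P^{k-1})$ gives $-A(P^k)v^k+b(P^k)\ge -A(P^{k-1})v^k+b(P^{k-1})=0$; substituting $b(P^k)=A(P^k)v^{k+1}$ yields $A(P^k)(v^{k+1}-v^k)\ge 0$, and $A(P^k)^{-1}\ge 0$ gives $v^{k+1}\ge v^k$.

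For (iii), condition 1 and the boundedness of $b$ in condition 2 give $\|v^{k+1}\|=\|A(P^k)^{-1}b(P^k)\|\le (\sup_P\|A(P)^{-1}\|)(\sup_P\|b(P)\|)<\infty$, so the nondecreasing sequence $(v^k)$ converges to some $v^*$. For (iv): for every $P$ one has $-A(P)v^k+b(P)\le -A(P^k)v^k+b(P^k)=A(P^k)(v^{k+1}-v^k)\to 0$, so in the limit $-A(P)v^*+b(P)\le 0$ for all $P$, i.e. $\sup_P\{-A(P)v^*+b(P)\}\le 0$; when $\cP$ is finite some policy $\bar P$ occurs infinitely often as $P^{k_j}$, and passing to the limit in $A(\bar P)v^{k_j+1}=b(\bar P)$ gives $-A(\bar P)v^*+b(\bar P)=0$, so the sup is exactly $0$ (in the general case one uses condition 2 and a compactness argument on $\cP$ in place of the pigeonhole step). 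Uniqueness follows from (i): if $v,v'$ both solve \eqref{policy_iteration_problem} and $P_v$ attains the sup at $v$, then $0=-A(P_v)v+b(P_v)\ge -A(P_v)v'+b(P_v)$ forces $A(P_v)(v'-v)\ge 0$, hence $v'\ge v$, and symmetrically $v\ge v'$. Finally, if $P^k=P^{k'}$ for some $k<k'$ then $v^{k+1}$ and $v^{k'+1}$ solve the same linear system, so monotonicity forces $v^{k+1}=\dots=v^{k'+1}$ and the iteration has already stabilized; hence either we reach the solution by step $k$ or $P^0,\dots,P^k$ are distinct, giving termination in at most $|\cP|$ steps. The main obstacle is part (i): carefully matching the graph-connectivity hypothesis (condition 3) with the block form of $A(P)$ to conclude weakly chained diagonal dominance and thereby the inverse-positivity that drives every subsequent step.
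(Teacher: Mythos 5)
The paper itself gives no proof of this theorem: it is invoked directly as an external result, attributed to Azimzadeh and Forsyth \cite{azimzadehforsyth}, and the surrounding text only verifies its four hypotheses for the particular matrices $L(w)$, $B(z)$, $D$ arising from the market-making scheme. Your proposal is therefore not an alternative to the paper's argument but a reconstruction of the underlying theorem that the paper relies on, and that reconstruction is essentially correct and faithful to \cite{azimzadehforsyth}.

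The core steps you carry out are the right ones: the row-wise decomposition of $A(P)=(I-D)(I-L(w))+D(I-B(z))$ into strictly and weakly diagonally dominant $Z$-rows according to the value of $d_i$, the observation that condition (3) supplies, from each weakly dominant row, a walk in the graph of $A(P)$ to a strictly dominant row (taking the first $d_j=0$ vertex along the path, so that only $I-B(z)$ rows are traversed and the edges of $B(z)$ coincide with those of $A(P)$), the conclusion that $A(P)$ is a weakly chained diagonally dominant $Z$-matrix and hence a nonsingular $M$-matrix with $A(P)^{-1}\ge 0$, and then the standard inverse-positivity argument for monotonicity, boundedness for convergence, a pigeonhole argument to identify the limit as a solution, and the uniqueness and finite-termination arguments. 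Two small points worth flagging. First, your parenthetical ``hence have positive diagonal'' for the $d_i=1$ rows is not a direct consequence of weak diagonal dominance alone; rather, weak dominance rules out a zero diagonal with nonzero off-diagonals, and condition (3) rules out identically zero rows among the $d_i=1$ states (a zero row has no outgoing edge, so the required path cannot exist), and it is this combination that prevents a degenerate row. Second, the monotonicity argument, as you correctly set it up, starts from $k\ge 1$ (since $-A(P^{k-1})v^k+b(P^{k-1})=0$ requires $v^k$ to have been produced by a previous policy-evaluation step); the step $v^1\ge v^0$ need not hold for an arbitrary initial guess $v^0$, which is also how the result is stated in \cite{azimzadehforsyth}. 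Neither point affects the validity of the overall argument.
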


In the context of the present problem, conditions (1) and (2) hold because the set of admissible policies $\cP$ is finite.

To verify condition (3), note that by admissibility of the control, the inventory satisfies $q\in[-\overline{Q}, \overline{Q}]$, and simultaneous buy and sell market orders are not allowed.

Consider a state $i\in\{1, \dots, M\}$ in the graph of the matrix $B(z)$ such that $d_i = 1$. Suppose $z_i=1$, which corresponds to a market buy order. Then there is an edge between nodes $i$ and $i+1$, and either $d_{i+1}=0$, in which case the required path is found, or $d_{i+1}=1$ and $z_{i+1}=1$. As we move from $i$ to $i+1$, the inventory $q$ increases by one. Repeating this argument, we eventually reach a vertex $j$ with $d_j=0$, where $q=\overline{Q}$ and a further increase in inventory is impossible.

The case $z_i=-1$, corresponding to a market sell order, is treated symmetrically.

To verify condition (4), we regroup the terms in the operator $L_{\delta}^{l^{\operatorname{a}}, l^{\operatorname{b}}}$:
\begin{gather*}
    L_{\delta}^{l^{\operatorname{a}}, l^{\operatorname{b}}} =  v(t, \alpha, q) \bigg( -\frac{k \overline{\alpha}_-}{\delta \alpha} - \frac{k \overline{\alpha}_+}{\delta \alpha} - \frac{\rho^2}{\delta \alpha^2} - \lambda^{\operatorname{a}} - \lambda^{\operatorname{b}} \bigg)
    \\
    +v(t, \alpha + \delta \alpha, q) \bigg( \frac{k \overline{\alpha}_-}{\delta \alpha} + \frac{\rho^2}{2\delta \alpha^2} \bigg) + v(t, \alpha - \delta \alpha, q) \bigg( \frac{k \overline{\alpha}_+}{\delta \alpha} + \frac{\rho^2}{2\delta \alpha^2} \bigg)
    \\
    + \lambda^{\operatorname{a}} \bigg(\1_{\{l^{\operatorname{a}}_n=0\}}\cI^+ v(t, \alpha, q) + \1_{\{l^{\operatorname{a}}_n=1\}} \cI^+ v(t, \alpha, q - 1) \bigg)
    \\
    + \lambda^{\operatorname{b}} \bigg(\1_{\{l^{\operatorname{b}}_n=0\}} \cI^- v(t, \alpha, q) + \1_{\{l^{\operatorname{b}}_n=1\}} \cI^- v(t, \alpha, q + 1) \bigg).
\end{gather*}

Hence, the diagonal entry of the matrix $I - L(w)$ has the form
\begin{equation*}
    1 + \delta t\bigg( \frac{k \overline{\alpha}_-}{\delta \alpha} + \frac{k \overline{\alpha}_+}{\delta \alpha} + \frac{\rho^2}{\delta \alpha^2} + \lambda^{\operatorname{a}} + \lambda^{\operatorname{b}} \bigg) > 0,
\end{equation*}
and thus $I - L(w)$ is indeed a strictly diagonally dominant $Z$-matrix with nonnegative diagonal elements.

The matrix $I-B(z)$ also satisfies the required properties, since its rows are either zero or contain a diagonal element equal to one and one adjacent element equal to $-1$, with all other entries equal to zero.

Therefore, the sufficient conditions of Theorem \ref{policy_iteration_convergence} are satisfied, and the policy iteration algorithm numerically yields the unique solution $v^n$ of equation (\ref{scheme}) for each $n \in \{N-1, N-2, \dots, 1\}$.

%--- Section ---%
\section{Convergence of the numerical scheme}

To prove convergence of the solution of the discrete equation (\ref{scheme}) to the viscosity solution of equation (\ref{hjb}), we employ the general framework for proving convergence of finite-difference approximations to viscosity solutions of partial differential equations, developed in \cite{barlessouganidis}.

We rewrite the numerical scheme (\ref{scheme}) using the notation introduced in \cite{barlessouganidis, barlesjakobsen}:
\begin{equation}\label{scheme_BSF}
    S(\delta, x, v^\delta(x), [v^\delta]_x) = 0, \quad x \in \overline{\Omega},
\end{equation}
where $\overline{\Omega} = [0, T] \times \mathbb{R} \times ([-\overline{Q}, \overline{Q}] \cap \mathbb{Z})$,  
$S: \mathbb{R}^+ \times \overline{\Omega} \times \mathbb{R} \times C^{1,2}_b(\overline{\Omega})\rightarrow \mathbb{R}$,  
$\delta = (\delta t, \delta \alpha)$ is the grid step,  
$v^\delta: \overline{\Omega} \rightarrow \mathbb{R}$ denotes the solution of (\ref{scheme_BSF}) and an approximation of $v$,  
and $[v^\delta]_x$ coincides with $v^\delta$ at all points except $x$,
\begin{equation*}
    [v^\delta]_{x}(\overline{x}) :=
    \begin{cases}
        v^\delta(\overline{x}), & \overline{x} \neq x,\\
        0, & \overline{x} = x.
    \end{cases}
\end{equation*}

\begin{proposition}[Monotonicity]\label{monotonicity}
Let $u, w \in C^{1, 2}_b\left(\overline{\Omega}\right)$ be such that $u \geq w$. Then, for all $\delta = (\delta t, \delta \alpha) \in \mathbb{R}^+ \times \mathbb{R}^+$, $x \in \overline{\Omega}$, and $r \in \mathbb{R}$,
\begin{equation*}
    S(\delta, x, r, u) \geq S(\delta, x, r, w).
\end{equation*}
\end{proposition}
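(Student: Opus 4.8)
The plan is to unwind the definition of the scheme operator $S$ and show it is a nonincreasing function of the values of the test function at points other than the evaluation point $x$, which is equivalent to the stated monotonicity. Concretely, writing $x = (t_n, \alpha_i, q_j)$ with $n < N$, the equation $S(\delta, x, r, \varphi) = 0$ corresponds, after multiplying through by $\delta t$ and rearranging, to
\begin{equation*}
    S(\delta, x, r, \varphi) = -\max\Bigg( \sup_{l^{\operatorname{a}}, l^{\operatorname{b}}} \Big( \varphi^{n+1} - r + \delta t\, L_\delta^{l^{\operatorname{a}}, l^{\operatorname{b}}}\varphi^n\big|_{r} + \delta t\, \Tilde f^{l^{\operatorname{a}},l^{\operatorname{b}},n}\Big), \ \sup_{z} \big( B_\delta^z\varphi^n\big|_{r} - \Upsilon \big)\Bigg),
\end{equation*}
where $\varphi^n\big|_r$ denotes that the value of $\varphi$ at $x$ itself has been replaced by $r$ (this is exactly the role of the bracket notation $[v^\delta]_x$). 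The key structural observation, already made explicit in the regrouped form of $L_\delta^{l^{\operatorname{a}},l^{\operatorname{b}}}$ in the policy-iteration section, is that every coefficient multiplying a value of $\varphi$ at a point different from $x$ is nonnegative: the coefficients $k\overline\alpha_-/\delta\alpha + \rho^2/(2\delta\alpha^2)$ and $k\overline\alpha_+/\delta\alpha + \rho^2/(2\delta\alpha^2)$ on the $\alpha$-neighbours, the factors $\lambda^{\operatorname{a}}, \lambda^{\operatorname{b}} \geq 0$ on the interpolated terms $\cI^\pm$, the interpolation weights $\gamma^{\operatorname{a}}/\delta\alpha - \lfloor\gamma^{\operatorname{a}}/\delta\alpha\rfloor \in [0,1]$ (and likewise for $\gamma^{\operatorname{b}}$), the coefficient $1$ on $\varphi^{n+1}$, and the coefficient $1$ on the shifted inventory value inside $B_\delta^z$.

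The steps I would carry out are: (i) fix $u \geq w$ in $C^{1,2}_b(\overline\Omega)$, fix $\delta$, $x = (t_n,\alpha_i,q_j)$, and $r$; handle the trivial terminal case $n = N$ separately, where $S$ does not depend on the test function at all, so equality holds; (ii) for $n < N$, expand $\delta t\, L_\delta^{l^{\operatorname{a}},l^{\operatorname{b}}} u\big|_r$ using the regrouped expression, noting that the only term involving the value at $x$ is the diagonal term, which has been overwritten by $r$ and is therefore identical for $u$ and $w$; (iii) observe that $\delta t\,\Tilde f^{l^{\operatorname{a}},l^{\operatorname{b}},n}$ does not depend on the test function; (iv) conclude that, term by term with nonnegative coefficients, $u^{n+1} - r + \delta t\, L_\delta^{l^{\operatorname{a}},l^{\operatorname{b}}} u\big|_r + \delta t\,\Tilde f \geq w^{n+1} - r + \delta t\, L_\delta^{l^{\operatorname{a}},l^{\operatorname{b}}} w\big|_r + \delta t\,\Tilde f$ for every choice of $l^{\operatorname{a}}, l^{\operatorname{b}}$, and similarly $B_\delta^z u\big|_r - \Upsilon \geq B_\delta^z w\big|_r - \Upsilon$ for every $z$; (v) take suprema over $l^{\operatorname{a}},l^{\operatorname{b}}$ and over $z$ and then the outer maximum, using monotonicity of $\sup$ and $\max$ under pointwise inequality, to get $\max(\dots \text{for } u) \geq \max(\dots \text{for } w)$; (vi) multiply by $-1$, which reverses the inequality, yielding $S(\delta, x, r, u) \geq S(\delta, x, r, w)$.

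There is no serious obstacle here — the proposition is essentially a bookkeeping exercise — but the one point that needs genuine care is step (ii): one must be sure that when $\alpha_i$ is an interior grid point and a shifted/interpolated argument $\alpha_i + \lfloor \gamma^{\operatorname{a}}/\delta\alpha\rfloor\delta\alpha$ (or a neighbour $\alpha_i \pm \delta\alpha$) happens to equal $\alpha_i$ itself — which can only occur in degenerate cases, e.g. if $\gamma^{\operatorname{a}} < \delta\alpha$ so that $\lfloor\gamma^{\operatorname{a}}/\delta\alpha\rfloor = 0$ — the coefficient of $\varphi(x)$ collected from those terms is still nonnegative, so overwriting it with $r$ (as $[v^\delta]_x$ prescribes) is harmless and symmetric between $u$ and $w$. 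Equivalently, one should argue at the level of the assembled matrix: $S$ is an affine function of the off-$x$ values of the test function with all coefficients of the same sign, and the sign is such that increasing any off-$x$ value can only increase the inner expressions, hence only decrease $S$. Once this is phrased carefully, the inequality is immediate and the boundary truncation/extrapolation in $\alpha$ changes nothing, since extrapolation at $\pm A$ is again a linear combination — here with one coefficient possibly negative, so if the scheme is ever evaluated at $\alpha = \pm A$ one should instead invoke the imposed boundary condition $\partial^2_\alpha v = 0$, under which the $\alpha$-neighbour outside the grid never enters; I would add a sentence pinning this down to keep the monotonicity airtight.
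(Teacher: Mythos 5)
Your core idea is the right one and matches the paper: every value of the test function at a point other than $x$ enters the expression inside $\max(\cdot)$ with a nonnegative coefficient, and taking $\sup$ and $\max$ preserves pointwise inequalities. The paper's own proof is essentially a condensed version of this, computing $S(\delta,x,r,u)-S(\delta,x,r,w)$ directly and observing that the difference equals $\max\{\dots\}$ of nonnegative quantities when $u\geq w$ and the $x$-value has been overwritten by $r$ in both. Your scrupulousness about the interpolation weights and about what happens when a shifted argument lands back on $x$ is a welcome addition — the paper glosses over this — and your caveat about linear extrapolation at $\alpha=\pm A$ is genuinely well spotted (the paper never resolves it; it simply relies on the boundary condition $\partial^2_\alpha v=0$ and upwinding so that $\pm A$ never reaches for a neighbor outside the grid, but the $\cI^\pm$ interpolation near the boundary is a real loose end).

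The proof as written, however, has a sign error that makes step (vi) incoherent. You define $S(\delta,x,r,\varphi)=-\max(\cdots)$, and in step (v) you correctly derive $\max(\cdots\text{for }u)\geq\max(\cdots\text{for }w)$. Multiplying by $-1$ then yields $S(u)\leq S(w)$, which is the \emph{opposite} of the proposition. The spurious minus sign is not in the paper: equation (\ref{scheme_BSF}) identifies $S$ with the left-hand side of (\ref{scheme}) as is, so $S=\max(\cdots)$ and the proposition genuinely asserts that $S$ is \emph{nondecreasing} in the function argument (your opening sentence calls it ``nonincreasing,'' which is the standard Barles--Souganidis sign convention but not the one this paper adopts). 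The fix is simple: delete the leading $-$ in your formula for $S$, delete step (vi) entirely, and step (v) already gives $S(\delta,x,r,u)\geq S(\delta,x,r,w)$. With that correction the argument is sound and equivalent to the paper's.
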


\begin{proof}
Consider a grid point $x^n_{ij}=(t_n, \alpha_i, q_j)$.  
Let $u, w\in C^{1, 2}_b(\overline{\Omega})$ be such that $u \geq w$ and $u^n_{ij}=w^n_{ij}=r$, where $u^n_{ij}=u(t_n, \alpha_i, q_j)$ and $w^n_{ij}=w(t_n, \alpha_i, q_j)$.  
Denote $[u]^n_{ij}:=[u]_{x^n_{ij}}$.  
If $t_n=T$, then
\begin{equation*}
       S(\delta, x^n_{ij}, u^n_{ij}, [u]^n_{ij}) - S(\delta, x^n_{ij}, w^n_{ij}, [w]^n_{ij}) = \Tilde{g}(T, \alpha_i, q_j) - \Tilde{g}(T, \alpha_i, q_j) = 0.
\end{equation*}
If $t_n<T$, then
\begin{multline*}
       S(\delta, x^n_{ij}, u^n_{ij}, [u]^n_{ij}) - S(\delta, x^n_{ij}, w^n_{ij}, [w]^n_{ij})
       \\
       = \max\bigg\{\sup\limits_{l_n^{\operatorname{a}}, l_n^{\operatorname{b}} \in \{0, 1\}} \bigg(\frac{u^{n+1}_{ij} - w^{n+1}_{ij}}{\delta t}+ \big(L_{\delta}^{l^{\operatorname{a}}, l^{\operatorname{b}}} (u-w)^n\big)_{ij}\bigg), \sup\limits_{z\in\{1, -1\}} \Big( \big(B_{\delta}^z (u-w)^n\big)_{ij}\Big)\bigg\} \geq 0,
\end{multline*}
because
\begin{gather*}
        \big(L_{\delta}^{l^{\operatorname{a}}, l^{\operatorname{b}}} (u-w)^n\big)_{ij} = \Big(u(t_n, \alpha_i + \delta \alpha, q_j) - w(t_n, \alpha_i + \delta \alpha, q_j)\Big)\bigg(\frac{k \overline{\alpha}_-}{\delta \alpha}+\frac{\rho^2}2\bigg)
        \\
        + \Big(u(t_n, \alpha_i - \delta \alpha, q_j) - w(t_n, \alpha_i - \delta \alpha, q_j)\Big)\bigg(\frac{k \overline{\alpha}_+}{\delta \alpha}+\frac{\rho^2}2\bigg)
        \\
        + \lambda^{\operatorname{a}} \bigg[\1_{\{l^{\operatorname{a}}_t=0\}} \cI^+ (u - w)(t_n, \alpha_i, q_j) + \1_{\{l^{\operatorname{a}}_t=1\}} \cI^+ (u - w)(t_n, \alpha_i, q_j - 1)\bigg]
        \\
        + \lambda^{\operatorname{b}} \bigg[\1_{\{l^{\operatorname{b}}_t=0\}} \cI^- (u - w)(t_n, \alpha_i, q_j) + \1_{\{l^{\operatorname{b}}_t=1\}} \cI^- (u - w)(t_n, \alpha_i, q_j + 1) \bigg] \geq 0.
\end{gather*}
\end{proof}

\begin{proposition}[Stability]\label{stability}
For any $\delta = (\delta t, \delta \alpha) \in \mathbb{R}^+ \times \mathbb{R}^+$, there exists a solution $v^\delta(t, \alpha, q)$ of equation (\ref{scheme_BSF}). Moreover, the following uniform bound holds:
\begin{equation*}
    U_1(t) \leq v^\delta(t, \alpha, q) \leq U_2(t),
\end{equation*}
where
\begin{gather*}
    U_1(t) = -\Upsilon \overline{Q} - \psi \overline{Q}^2 - (T-t)(\sigma A \overline{Q} + \phi \overline{Q}^2),
    \\
    U_2(t) = (T-t)(\Delta(\lambda^{\operatorname{a}} + \lambda^{\operatorname{b}}) + \sigma A \overline{Q}).
\end{gather*}
\end{proposition}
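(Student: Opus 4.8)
The plan is to obtain existence by marching backward over the time slices, and the two-sided bound by exhibiting the affine-in-$t$, $(\alpha,q)$-independent functions $U_1(t)$ and $U_2(t)$ as a sub- and a super-solution of the scheme and then invoking a discrete comparison principle. Existence is immediate: at $t_N=T$ the value $v^N$ is fixed by the terminal condition $v^N=\Tilde g(T,\cdot)$, and for $n<N$, given $v^{n+1}$, solving (\ref{scheme}) for $v^n$ is precisely problem (\ref{policy_iteration_problem}) with the matrices $A(P)$ and vectors $b(P)$ written down in Section 4; since the hypotheses of Theorem \ref{policy_iteration_convergence} were verified there for this $A$, $b$ and the finite set $\cP$, that theorem delivers a (unique) $v^n$, and iterating yields $v^\delta$ on all of $\overline\Omega$.

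Two elementary observations carry the bound. First, each operator $\cI^{\pm}$ writes its output as an affine combination of grid values whose weights sum to one (this stays true in the extrapolated region, where one of the two weights becomes negative), so $\cI^{\pm}\mathbf 1=\mathbf 1$; consequently every $L_{\delta}^{l^{\operatorname a},l^{\operatorname b}}$ annihilates constants — the finite-difference part telescopes, the jump part cancels — and $B_{\delta}^{z}\mathbf 1=0$. In matrix form this reads $(I-L(w))\mathbf 1=\mathbf 1$, $(I-B(z))\mathbf 1=0$, hence $A(P)\mathbf 1=(I-D)\mathbf 1$ for every $P$. Second, by the verification carried out in Section 4 each $I-L(w)$ and $I-B(z)$, and therefore each $A(P)$, is a nonsingular M-matrix, so $A(P)^{-1}\ge 0$; this gives a comparison principle for the single-slice equation $\sup_{P\in\cP}\{-A(P)V+\beta(P)\}=0$: if $U$ is a subsolution and $W$ a supersolution for the same $\beta$, then choosing $P^\star$ attaining the supremum in $U$ gives $A(P^\star)(U-W)\le 0$, whence $U\le W$.

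With these in hand I would prove $U_1(t_n)\le v^n\le U_2(t_n)$ pointwise by backward induction. For $n=N$, $|q|\le\overline Q$ gives $-\Upsilon\overline Q-\psi\overline Q^2\le-\Upsilon|q|-\psi q^2\le 0$. For the step, view (\ref{scheme}) at level $n$ as $\sup_P\{-A(P)v^n+b(P)\}=0$, where $b(P)$ is monotone in $v^{n+1}$ because $I-D\ge 0$. For the upper bound, replace $v^{n+1}$ by the constant $U_2(t_{n+1})$ in $b$, obtaining data $\overline b\ge b$; then $v^n$ is a subsolution of the $\overline b$-problem, while, using $A(P)\mathbf 1=(I-D)\mathbf 1$, the identity $U_2(t_n)-U_2(t_{n+1})=\delta t\big(\Delta(\lambda^{\operatorname a}+\lambda^{\operatorname b})+\sigma A\overline Q\big)$, and the grid estimate $\Tilde f^{l^{\operatorname a},l^{\operatorname b},n}\le\sigma A\overline Q+\Delta(\lambda^{\operatorname a}+\lambda^{\operatorname b})$, a short computation gives $-A(P)(U_2(t_n)\mathbf 1)+\overline b(P)=\delta t\,(I-D)\big(\Tilde f^{l^{\operatorname a},l^{\operatorname b},n}-(\sigma A\overline Q+\Delta(\lambda^{\operatorname a}+\lambda^{\operatorname b}))\mathbf 1\big)-D\Upsilon\mathbf 1\le 0$, so $U_2(t_n)\mathbf 1$ is a supersolution and comparison yields $v^n\le U_2(t_n)$. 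The lower bound is symmetric: use the policy with $l^{\operatorname a}=l^{\operatorname b}=0$ and no impulse, the grid estimate $\Tilde f^{0,0,n}=\alpha\sigma q-\phi q^2\ge-(\sigma A\overline Q+\phi\overline Q^2)$, and $U_1(t_{n+1})-U_1(t_n)=\delta t(\sigma A\overline Q+\phi\overline Q^2)$, to see that $U_1(t_n)\mathbf 1$ is a subsolution and $v^n$ a supersolution of the problem with data coming from $v^{n+1}\ge U_1(t_{n+1})$, and conclude $U_1(t_n)\le v^n$.

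The step I expect to be load-bearing is the M-matrix property underlying the discrete comparison principle — that $A(P)^{-1}\ge 0$ for every $P$ — since this is where the $\alpha$-truncation, the homogeneous condition on $\partial_{\alpha\alpha}v$ at $\alpha=\pm A$, and the connectivity of the graph of $B(z)$ intervene; fortunately all of these were already settled in Section 4 for Theorem \ref{policy_iteration_convergence}. A useful point is that taking barriers constant in $(\alpha,q)$ makes the nonlocal terms act trivially, so one never has to estimate $\cI^{\pm}$ on genuinely non-constant functions, where the negative extrapolation weights would spoil the naive maximum-principle argument.
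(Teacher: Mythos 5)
Your proof of existence matches the paper's (policy iteration at each time slice). For the bounds, however, you take a genuinely different route: the paper argues directly with a discrete maximum principle --- it picks the grid point where $v^\delta(t_n,\cdot,\cdot)$ attains its minimum (resp.\ maximum), observes that at such a point $(L_\delta^{l^{\operatorname a},l^{\operatorname b}} v^n)_{ij}\ge 0$ (resp.\ $\le 0$) and $(B_\delta^z v^n)_{ij}\le 0$, and telescopes the resulting one-step inequality $v^n_{ij}\ge v^{n+1}_{ij}-\delta t(\sigma A\overline Q+\phi\overline Q^2)$ (resp.\ the analogous upper estimate) backward from the terminal data. You instead exhibit $U_1(t_n)\mathbf 1$ and $U_2(t_n)\mathbf 1$ as sub- and super-solutions of the single-slice problem with barrier data, and close the induction via a discrete comparison principle built on $A(P)\mathbf 1=(I-D)\mathbf 1$ and $A(P)^{-1}\ge 0$. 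Both work; the paper's argument is more elementary and self-contained (it never needs nonnegativity of $A(P)^{-1}$, only the sign of $L_\delta$ and $B_\delta$ at an extremum), whereas yours is more structural and would transfer to settings where the extremum argument is awkward, at the cost of leaning on the M-matrix property of $A(P)$, which is implicit in the cited Azimzadeh--Forsyth framework but not spelled out in the paper. Your remark that constant barriers make the nonlocal terms act trivially is a real advantage of your route: the paper's extremum argument tacitly needs the $\cI^{\pm}$ weights to be nonnegative to conclude $(L_\delta v^n)_{ij}\ge 0$ at a minimum, which is delicate near $\alpha=\pm A$ where extrapolation occurs --- though note that the M-matrix/$Z$-matrix structure of $I-L(w)$, which you also invoke, is subject to the same caveat.
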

\begin{proof}
Existence follows from Theorem \ref{policy_iteration_convergence}.

Let $t_n < T$ and select a grid point $(\alpha_i, q_j)$ where at $t = t_n$ the value of $v^\delta$ attains its minimum:
\begin{equation*}
        v^n_{ij} = v^\delta(t_n, \alpha_i, q_j) = \min_{k, m} v^\delta(t_n, \alpha_k, q_m).
\end{equation*}
Since $\big(L_{\delta}^{l^{\operatorname{a}}, l^{\operatorname{b}}} v^n\big)_{ij} \geq 0$ for all $l^{\operatorname{a}}_n, l^{\operatorname{b}}_n \in \{0, 1\}$,
\begin{gather*}
        0 = \max\Bigg\{\frac{v_{ij}^{n+1} - v_{ij}^n}{\delta t} + 
        \sup\limits_{l_n^{\operatorname{a}}, l_n^{\operatorname{b}} \in \{0, 1\}} \Big(\big(L_{\delta}^{l^{\operatorname{a}}, l^{\operatorname{b}}} v^n\big)_{ij} + \Tilde{f}_{ij}^{l^{\operatorname{a}}, l^{\operatorname{b}}, n} \Big),
        \sup\limits_{z\in\{1, -1\}} \Big( \big(B_{\delta}^z v^n\big)_{ij} - \Upsilon \Big)
        \Bigg\} \geq
        \\
        \geq \frac{v_{ij}^{n+1} - v_{ij}^n}{\delta t} - \sigma A \overline{Q} - \phi \overline{Q}^2.
\end{gather*}
Inductively in $n$ we obtain
\begin{gather*}
        v_{ij}^n \geq v_{ij}^{n+1} - \sigma A \overline{Q} - \phi \overline{Q}^2 \geq \min_{k, m} v^\delta(t_{n+1}, \alpha_k, q_m) - \sigma A \overline{Q} - \phi \overline{Q}^2.
\end{gather*}
Since $v^\delta(T, \alpha, q) = \Tilde{g}(T, \alpha, q) \geq -\Upsilon \overline{Q} - \psi \overline{Q}^2$, we get the lower bound
\begin{equation*}
        v^\delta(t, \alpha, q) \geq -\Upsilon \overline{Q} - \psi \overline{Q}^2 - (T-t)(\sigma A \overline{Q} + \phi \overline{Q}^2) = U_1(t).
\end{equation*}
The lower bound corresponds to the case where the market maker holds the extreme position $\pm \overline{Q}$ during the entire period $[t, T]$, while the asset price follows a strong adverse trend, producing the largest running and terminal penalties.

Similarly, for the upper bound let $t_n < T$ and take $(\alpha_i, q_j)$ where $v^\delta(t_n, \alpha_i, q_j)$ attains its maximum:
\begin{equation*}
        v^n_{ij} = v^\delta(t_n, \alpha_i, q_j) = \max_{k, m} v^\delta(t_n, \alpha_k, q_m).
\end{equation*}
Then $\big(L_{\delta}^{l^{\operatorname{a}}, l^{\operatorname{b}}} v^n\big)_{ij} \leq 0$ for $l_n^{\operatorname{a}}, l_n^{\operatorname{b}} \in \{0, 1\}$, and
\begin{equation*}
        \big(B_{\delta}^z v^n\big)_{ij} - \Upsilon \leq -\Upsilon < 0 \quad \text{for all } z\in \{1, -1\}.
\end{equation*}
Hence,
\begin{gather*}
        0 = \max\Bigg\{\frac{v_{ij}^{n+1} - v_{ij}^n}{\delta t} + 
        \sup\limits_{l_n^{\operatorname{a}}, l_n^{\operatorname{b}} \in \{0, 1\}} \Big(\big(L_{\delta}^{l^{\operatorname{a}}, l^{\operatorname{b}}} v^n\big)_{ij} + \Tilde{f}_{ij}^{l^{\operatorname{a}}, l^{\operatorname{b}}, n} \Big),
        \sup\limits_{z\in\{1, -1\}} \Big( \big(B_{\delta}^z v^n\big)_{ij} - \Upsilon \Big)
        \Bigg\} \\
        \leq \frac{v_{ij}^{n+1} - v_{ij}^n}{\delta t} + \Delta(\lambda^{\operatorname{a}} + \lambda^{\operatorname{b}}) + \sigma A \overline{Q}.
\end{gather*}
This gives the induction step
\begin{equation*}
        v_{ij}^n \leq v_{ij}^{n+1} + \Delta(\lambda^{\operatorname{a}} + \lambda^{\operatorname{b}}) + \sigma A \overline{Q} \leq \max_{k, m} v^\delta(t_{n+1}, \alpha_k, q_m) + \Delta(\lambda^{\operatorname{a}} + \lambda^{\operatorname{b}}) + \sigma A \overline{Q}.
\end{equation*}
Since $v^\delta(T, \alpha, q) = \Tilde{g}(T, \alpha, q) \leq 0$, we obtain the upper bound
\begin{equation*}
        v^\delta(t, \alpha, q) \leq (T-t)(\Delta(\lambda^{\operatorname{a}} + \lambda^{\operatorname{b}}) + \sigma A \overline{Q}) = U_2(t).
\end{equation*}
The upper bound $U_2(t)$ can be interpreted as the accumulation of two maximum gains over $[t, T]$: from the spread and from exploiting predictable price movements.
\end{proof}

\begin{proposition}[Consistency]\label{consistency}
For all $(t, \alpha, q) \in \overline{\Omega}$ and $\varphi \in C^{1, 2}_b (\overline{\Omega})$,
\begin{multline*}
    \lim_{\substack{(\delta t, \delta \alpha) \to (0, 0)\\ (t', \alpha') \to (t, \alpha) \\ \xi \to 0}} S\Big((\delta t, \delta \alpha), (t', \alpha', q), \varphi^\delta(t', \alpha', q)+\xi, [\varphi^\delta+\xi]_{(t', \alpha', q)}\Big) 
    \\
    =\max\Bigg(
    \sup\limits_{l_n^{\operatorname{a}}, l_n^{\operatorname{b}} \in \{0, 1\}} \bigg(\frac{\partial \varphi}{\partial t} + L^{l^{\operatorname{a}}, l^{\operatorname{b}}} \varphi + \Tilde{f}^{l^{\operatorname{a}}, l^{\operatorname{b}}} \bigg),
    \sup\limits_{z\in\{1, -1\}} \Big( \cM^z \varphi - \varphi \Big)
    \Bigg).
\end{multline*}
\end{proposition}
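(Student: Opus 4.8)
The plan is to substitute the test function into the scheme directly, observe that the auxiliary shift $\xi$ disappears because every difference operator occurring in the scheme annihilates additive constants, and then pass to the limit term by term by Taylor expansion. I take $t<T$ throughout; for $t=T$ the statement is immediate from the terminal line of~(\ref{scheme}). The first step is to note that feeding $S$ the value $\varphi(t',\alpha',q)+\xi$ at the grid point $(t',\alpha',q)$ together with $[\varphi+\xi]_{(t',\alpha',q)}$ at the remaining points is the same as evaluating the scheme on the single function $\varphi+\xi$. The forward time quotient $\psi\mapsto(\psi(t'+\delta t,\cdot)-\psi(t',\cdot))/\delta t$, the operator $L_{\delta}^{l^{\operatorname{a}},l^{\operatorname{b}}}$, and the operator $B_{\delta}^{z}$ each vanish on constants --- for $L_{\delta}^{l^{\operatorname{a}},l^{\operatorname{b}}}$ this is precisely the coefficient computation carried out in the proof of Theorem~\ref{policy_iteration_convergence}, where the weights of the centre value, of the two $\alpha$-neighbours, and of the interpolated values $\cI^{\pm}$ were shown to add up to zero. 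Hence, abbreviating $D_{t}\varphi(t',\alpha',q):=(\varphi(t'+\delta t,\alpha',q)-\varphi(t',\alpha',q))/\delta t$,
\begin{multline*}
S\big((\delta t,\delta\alpha),(t',\alpha',q),\varphi(t',\alpha',q)+\xi,[\varphi+\xi]_{(t',\alpha',q)}\big)\\
=\max\Big(\sup_{l^{\operatorname{a}},l^{\operatorname{b}}\in\{0,1\}}\big(D_{t}\varphi+L_{\delta}^{l^{\operatorname{a}},l^{\operatorname{b}}}\varphi+\Tilde{f}^{l^{\operatorname{a}},l^{\operatorname{b}}}\big)(t',\alpha',q),\ \sup_{z\in\{1,-1\}}\big(B_{\delta}^{z}\varphi(t',\alpha',q)-\Upsilon\big)\Big),
\end{multline*}
and in particular the left-hand side does not depend on $\xi$.

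Since $\max$ and the suprema over the finite sets $\{0,1\}^{2}$ and $\{1,-1\}$ are continuous, it then suffices to check that each of the finitely many inner expressions converges to the corresponding term of the HJBQVI operator evaluated at $(t,\alpha,q)$. The easy limits are $D_{t}\varphi(t',\alpha',q)\to\partial_{t}\varphi(t,\alpha,q)$ (mean value theorem and continuity of $\partial_{t}\varphi$); $\Tilde{f}^{l^{\operatorname{a}},l^{\operatorname{b}}}(t',\alpha',q)\to\Tilde{f}^{l^{\operatorname{a}},l^{\operatorname{b}}}(t,\alpha,q)$, as $\Tilde{f}^{l^{\operatorname{a}},l^{\operatorname{b}}}$ is affine in $\alpha$; and $B_{\delta}^{z}\varphi(t',\alpha',q)-\Upsilon=\varphi(t',\alpha',q\pm1)-\varphi(t',\alpha',q)-\Upsilon\to\cM^{z}\varphi(t,\alpha,q)-\varphi(t,\alpha,q)$ by continuity of $\varphi$.

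The one substantive limit is $L_{\delta}^{l^{\operatorname{a}},l^{\operatorname{b}}}\varphi(t',\alpha',q)\to L^{l^{\operatorname{a}},l^{\operatorname{b}}}\varphi(t,\alpha,q)$, which I would treat in three pieces. (i) The central second difference in $\alpha$ converges to $\tfrac12\rho^{2}\partial_{\alpha}^{2}\varphi(t,\alpha,q)$ by the standard second-order Taylor estimate, which is legitimate because $\partial_{\alpha}^{2}\varphi$ is continuous. (ii) The upwind first difference $k\overline{\alpha}_{-}\tfrac{\varphi(t',\alpha'+\delta\alpha,q)-\varphi(t',\alpha',q)}{\delta\alpha}-k\overline{\alpha}_{+}\tfrac{\varphi(t',\alpha',q)-\varphi(t',\alpha'-\delta\alpha,q)}{\delta\alpha}$ converges to $-k\overline{\alpha}\,\partial_{\alpha}\varphi(t,\alpha,q)$ (which equals $-k\alpha\,\partial_{\alpha}\varphi(t,\alpha,q)$ on the truncation interval $|\alpha|\leq A$), using that the two one-sided quotients are uniformly bounded --- this is where $\varphi\in C^{1,2}_{b}$ is used --- and tend to $\partial_{\alpha}\varphi$, while $\overline{\alpha}_{\pm}$ depend continuously on $\alpha$. (iii) For the jump terms, $\cI^{+}\varphi(t',\alpha',q)$ is a convex combination of the values of $\varphi$ at the two grid nodes straddling $\alpha'+\gamma^{\operatorname{a}}$; both nodes tend to $\alpha+\gamma^{\operatorname{a}}$, so continuity of $\varphi$ gives $\cI^{+}\varphi(t',\alpha',q)\to\varphi(t,\alpha+\gamma^{\operatorname{a}},q)$, and likewise for $\cI^{+}\varphi(t',\alpha',q-1)$ and for $\cI^{-}$ with shift $\gamma^{\operatorname{b}}$; combining these with $-\lambda^{\operatorname{a}}\varphi(t',\alpha',q)-\lambda^{\operatorname{b}}\varphi(t',\alpha',q)$ reproduces the jump part of $L^{l^{\operatorname{a}},l^{\operatorname{b}}}\varphi(t,\alpha,q)$. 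Adding (i)--(iii) and then collecting all the termwise limits gives the asserted identity.

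There is no deep obstacle here --- the argument is a routine Taylor-expansion verification --- but the points requiring care, and thus the main difficulty, are the switching of the upwind coefficients at $\alpha=0$ (where $\overline{\alpha}_{+}=\overline{\alpha}_{-}=0$, so both one-sided contributions vanish, in agreement with the vanishing drift there) and at the truncation edges $\alpha=\pm A$, together with checking that the off-grid interpolations $\cI^{\pm}$ do not spoil consistency. All of this is controlled by the uniform bound on the difference quotients coming from $\varphi\in C^{1,2}_{b}$ and by the continuity of $\varphi$ and of $\alpha\mapsto\overline{\alpha}_{\pm}$.
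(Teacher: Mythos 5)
Your proof is correct and follows essentially the same approach as the paper's: verify the limit directly via continuity of $\varphi$ and term-by-term Taylor expansion of the finite differences. The paper's own proof is a two-line assertion ("from the definition of the numerical scheme and the continuity of $\varphi$") and never even mentions the perturbation $\xi$; your observation that $\xi$ cancels because the time quotient, $L_{\delta}^{l^{\operatorname{a}},l^{\operatorname{b}}}$, and $B_{\delta}^{z}$ all annihilate additive constants (the zero-row-sum structure made visible in the regrouped form of $L_\delta$) is exactly the detail the paper glosses over, and your piecewise treatment of the upwind drift, central second difference, and interpolated jump terms fills in the remaining gaps cleanly.
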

\begin{proof}
From the definition of the numerical scheme and the continuity of $\varphi$, we obtain
\begin{gather*}
\lim_{\substack{(\delta t, \delta \alpha) \to (0, 0)\\ (t', \alpha') \to (t, \alpha)}}\sup\limits_{l_n^{\operatorname{a}}, l_n^{\operatorname{b}} \in \{0, 1\}} \bigg(\frac{\varphi(t'+\delta t, \alpha', q) - \varphi(t', \alpha', q)}{\delta t} + L_{\delta}^{l^{\operatorname{a}}, l^{\operatorname{b}}} \varphi(t', \alpha', q) + \Tilde{f}^{l^{\operatorname{a}}, l^{\operatorname{b}}}(t', \alpha', q) \bigg)
\\
=\sup\limits_{l_n^{\operatorname{a}}, l_n^{\operatorname{b}} \in \{0, 1\}} \bigg(\frac{\partial \varphi}{\partial t}(t, \alpha, q) + L^{l^{\operatorname{a}}, l^{\operatorname{b}}} \varphi(t, \alpha, q) + \Tilde{f}^{l^{\operatorname{a}}, l^{\operatorname{b}}}(t, \alpha, q) \bigg),
\end{gather*}
and
\begin{equation*}
\lim_{\substack{(\delta t, \delta \alpha) \to (0, 0)\\ (t', \alpha') \to (t, \alpha)}}\sup\limits_{z\in\{1, -1\}} \Big( B_{\delta}^z \varphi(t', \alpha', q) - \Upsilon \Big)
=\sup\limits_{z\in\{1, -1\}} \Big( \cM^z \varphi(t, \alpha, q) - \varphi(t, \alpha, q) \Big).
\end{equation*}
\end{proof}

We rely on the comparison principle from \cite[Theorem~2]{carteawang}.
\begin{theorem}[Comparison principle]
Let $v_1$ and $v_2$ be bounded subsolution and supersolution of equation (\ref{hjb}), respectively, and assume that $v_1(T) \leq v_2(T)$. Then $v_1 \leq v_2$.
\end{theorem}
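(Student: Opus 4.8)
The statement is quoted from \cite[Theorem~2]{carteawang}; below is the argument one would carry out to establish it, following the classical doubling-of-variables technique for viscosity solutions of Hamilton--Jacobi--Bellman quasi-variational inequalities, adapted to the present features: the unbounded variable $\alpha$ with mean reversion, the discrete bounded inventory $q$, the fixed translations $\alpha\mapsto\alpha+\gamma^{\operatorname{a}}$, $\alpha\mapsto\alpha-\gamma^{\operatorname{b}}$, and the impulse obstacle. Assume, for contradiction, that $M:=\sup_{\overline{\Omega}}(v_1-v_2)>0$ (as usual $v_1$ upper and $v_2$ lower semicontinuous). Equation (\ref{hjb}) carries no genuine discount term --- the coefficient $-(\lambda^{\operatorname{a}}+\lambda^{\operatorname{b}})$ of $v$ produced when $L^{l^{\operatorname{a}},l^{\operatorname{b}}}$ is regrouped is exactly offset by the translated values of $v$ --- so I would first pass to $v_2^\theta:=v_2+\theta(T+1-t)$ with $\theta>0$ small. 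Since $L^{l^{\operatorname{a}},l^{\operatorname{b}}}$ annihilates functions of $t$ alone, one checks that $\partial_t v_2^\theta+L^{l^{\operatorname{a}},l^{\operatorname{b}}}v_2^\theta+\tilde{f}^{l^{\operatorname{a}},l^{\operatorname{b}}}\le-\theta<0$, while $\cM^z v_2^\theta-v_2^\theta=\cM^z v_2-v_2\le0$ is unchanged and $v_2^\theta(T)\ge v_1(T)+\theta$; thus $v_2^\theta$ is a supersolution, strict in its differential part, and $\sup(v_1-v_2^\theta)>0$ for $\theta$ small.

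Keeping the discrete, bounded inventory $q$ as a parameter, I would double only the continuous variables: for $\e,\eta,\kappa>0$ consider
\[
\Phi(t,\alpha,s,\beta,q)=v_1(t,\alpha,q)-v_2^\theta(s,\beta,q)-\frac{|\alpha-\beta|^2}{2\e}-\frac{|t-s|^2}{2\eta}-\kappa(\alpha^2+\beta^2),
\]
where the last term forces the upper semicontinuous $\Phi$ to attain a maximum on the unbounded $\alpha$-range. Boundedness of $v_1,v_2$ gives the standard estimates at the maximiser $(\hat t,\hat\alpha,\hat s,\hat\beta,\hat q)$: $\kappa(\hat\alpha^2+\hat\beta^2)$ stays bounded (hence $\kappa|\hat\alpha|,\kappa|\hat\beta|\to0$), $|\hat\alpha-\hat\beta|^2/\e\to0$ and $|\hat t-\hat s|^2/\eta\to0$, and $\max\Phi\to\sup(v_1-v_2^\theta)>0$ as $\e,\eta,\kappa\to0$. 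The terminal inequality $v_1(T)\le v_2(T)$ together with the strict gap $\theta$ at $t=T$ rules out $\hat t=T$ and $\hat s=T$ for small parameters, so the viscosity sub/supersolution inequalities for (\ref{hjb}) apply at the maximiser.

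Two tasks then remain. The first is the obstacle. At the maximiser the subsolution inequality says that either the differential part of (\ref{hjb}) is $\ge0$ in the viscosity sense, or there is $\hat z\in\{1,-1\}$ with $v_1(\hat t,\hat\alpha,\hat q+\hat z)-v_1(\hat t,\hat\alpha,\hat q)\ge\Upsilon$; in the second case, combining with $v_2(\hat s,\hat\beta,\hat q+\hat z)-v_2(\hat s,\hat\beta,\hat q)\le\Upsilon$ (the supersolution's obstacle inequality, untouched by the perturbation) and the fact that $\Phi$ depends on $q$ only through $v_1-v_2$ shows that $(\hat t,\hat\alpha,\hat s,\hat\beta,\hat q+\hat z)$ is again a maximiser of $\Phi$. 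Since each impulse costs $\Upsilon>0$, this chain of maximisers cannot cycle, and as $q\in[-\overline{Q},\overline{Q}]\cap\mathbb{Z}$ is finite with the impulses at $\pm\overline{Q}$ constrained by admissibility, the chain terminates at a maximiser where the differential part is active. There I would apply the parabolic Crandall--Ishii lemma to the genuinely local part only --- the $t$- and $\alpha$-derivatives against the penalty function, the translated terms in $L^{l^{\operatorname{a}},l^{\operatorname{b}}}$ being bounded and controlled directly by the maximality of $\Phi$ --- obtaining second-order quantities $X,Y$ with $X\le Y+C\kappa$. Subtracting the subsolution inequality for $v_1$ and the strict supersolution inequality for $v_2^\theta$ (for a common maximising $(l^{\operatorname{a}}_n,l^{\operatorname{b}}_n)$) and collecting terms: the $t$-penalty contributions cancel; the mean-reverting drift contributes $-k|\hat\alpha-\hat\beta|^2/\e-2k\kappa(\hat\alpha^2+\hat\beta^2)\le0$, which is precisely where $k>0$ enters; the second-order term is $\tfrac12\rho^2(X-Y)\le\tfrac12 C\rho^2\kappa$; the $\lambda^{\operatorname{a}},\lambda^{\operatorname{b}}$ translation terms are bounded, using that $\Phi$ at the $\alpha$-translated points is $\le\max\Phi$, by $O\big(\kappa(|\hat\alpha|+|\hat\beta|+1)\big)$; and $\tilde{f}^{l^{\operatorname{a}},l^{\operatorname{b}}}$ contributes $\sigma\hat q(\hat\alpha-\hat\beta)$. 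Hence $\theta\le o(1)$ as $\e,\eta,\kappa\to0$, impossible for fixed $\theta>0$. Therefore $\sup(v_1-v_2^\theta)\le0$, and letting $\theta\to0$ gives $v_1\le v_2$.

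The step I expect to be the main obstacle is the interplay between the impulse/obstacle term and the perturbation. The absence of a discount forces a strict-supersolution perturbation, yet a $q$-dependent perturbation would destroy the supersolution's constraint $\cM^z v_2-v_2\le0$ at interior inventory levels (no real function of the integer $q$ can be strictly decreased by both $q\mapsto q+1$ and $q\mapsto q-1$), so the perturbation must be independent of $q$, and the reduction to the differential regime has to be extracted instead from $\Upsilon>0$ and the finiteness of the inventory grid. A secondary difficulty is the unbounded $\alpha$-domain: one must verify that the $\kappa$-penalty corrections entering the fixed translations vanish (via $\kappa|\hat\alpha|\to0$), and the argument closes only because the mean-reversion coefficient $k$ is positive, which places the drift term on the favourable side of the inequality.
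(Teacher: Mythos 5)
The paper does not prove this statement at all: it is quoted verbatim from \cite[Theorem~2]{carteawang} and used as an external ingredient, so there is no in-paper argument to compare yours against. What you have done is reconstruct a plausible proof of the cited result from scratch.

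As a reconstruction, your sketch is sound and the structural observations are the right ones. You correctly note that $L^{l^{\operatorname{a}},l^{\operatorname{b}}}$ annihilates functions of $t$ alone (both the $\alpha$-derivatives and the jump differences vanish), so $v_2+\theta(T+1-t)$ is a strict supersolution of the PDE part while leaving $\cM^z v_2-v_2$ unchanged; and you correctly note that the perturbation must be $q$-independent, since no function of the integer $q$ can strictly separate $v$ from both $\cM^{+1}v$ and $\cM^{-1}v$ simultaneously. The chain argument for exiting the impulse regime is correct: the sub- and supersolution obstacle inequalities at a maximiser force the $q$-shifted point to be another maximiser, the cost $\Upsilon>0$ prevents cycling, and the finite inventory grid with one-sided admissibility at $\pm\overline{Q}$ guarantees termination at a $q$-level where the differential part is active. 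The $\kappa(\alpha^2+\beta^2)$ localisation and the sign of the drift contribution $-k|\hat\alpha-\hat\beta|^2/\e-2k\kappa(\hat\alpha^2+\hat\beta^2)\le0$ are handled correctly, as is the use of $\kappa|\hat\alpha|\to0$ to absorb the penalty corrections at the translated points $\hat\alpha\pm\gamma^{\operatorname{a,b}}$. What the cited reference actually does I cannot compare line-by-line, but your argument is a standard and internally consistent route to the comparison principle for this class of nonlocal HJBQVIs; the technical steps left implicit (precise definition of viscosity sub/supersolution for the nonlocal terms, the Crandall--Ishii estimate with the $\kappa$-correction, semicontinuity at $t=T$) are exactly the ones one would expect to need to flesh out.
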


\begin{theorem}[Convergence]
As $(\delta t, \delta \alpha) \to (0, 0)$, the solution $v^{\delta}(t, \alpha, q)$ of equation (\ref{scheme_BSF}) converges locally uniformly to the unique viscosity solution $v(t, \alpha, q)$ of equation (\ref{hjb}).
\end{theorem}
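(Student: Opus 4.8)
The plan is to apply the half-relaxed-limits method in the form of the general convergence theorem of \cite{barlessouganidis}: Propositions~\ref{monotonicity}, \ref{stability} and \ref{consistency} already supply monotonicity, stability and consistency of the scheme (\ref{scheme_BSF}), the comparison principle for (\ref{hjb}) supplies strong uniqueness, and the conclusion follows by combining them. First I would introduce the upper and lower relaxed limits
\[
  \overline v(x) := \limsup_{\substack{\delta \to 0 \\ x' \to x}} v^\delta(x'), \qquad
  \underline v(x) := \liminf_{\substack{\delta \to 0 \\ x' \to x}} v^\delta(x'), \qquad x \in \overline{\Omega}.
\]
By the two-sided bound $U_1(t) \le v^\delta \le U_2(t)$ of Proposition~\ref{stability}, both are finite on $\overline{\Omega}$; $\overline v$ is upper semicontinuous, $\underline v$ is lower semicontinuous, and $\underline v \le \overline v$ everywhere.

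Next I would show that $\overline v$ is a viscosity subsolution and $\underline v$ a viscosity supersolution of (\ref{hjb}) on $[0,T) \times \mathbb{R} \times ([-\overline{Q}, \overline{Q}] \cap \mathbb{Z})$ by the standard test-function argument. Let $\varphi \in C^{1,2}_b(\overline{\Omega})$ be such that $\overline v - \varphi$ has a strict global maximum at $x_0 = (t_0, \alpha_0, q_0)$ with $t_0 < T$ (one reduces to this case by the usual perturbation-and-localization argument, using in particular that the $q$-variable ranges over a finite set). A compactness argument produces $\delta_m \to 0$ and grid points $x_m \to x_0$ with $v^{\delta_m} - \varphi$ maximal over the grid at $x_m$ and $v^{\delta_m}(x_m) \to \overline v(x_0)$. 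Setting $\xi_m := v^{\delta_m}(x_m) - \varphi(x_m) \to 0$, one has $[\varphi + \xi_m]_{x_m} \ge [v^{\delta_m}]_{x_m}$ pointwise, so monotonicity (Proposition~\ref{monotonicity}) gives
\[
  S\big(\delta_m, x_m, \varphi(x_m) + \xi_m, [\varphi + \xi_m]_{x_m}\big) \ge S\big(\delta_m, x_m, v^{\delta_m}(x_m), [v^{\delta_m}]_{x_m}\big) = 0 ;
\]
letting $m \to \infty$ and using consistency (Proposition~\ref{consistency}) shows that the right-hand side of the limit in Proposition~\ref{consistency}, evaluated at $x_0$ with test function $\varphi$, is nonnegative, i.e.\ the subsolution inequality holds. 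The supersolution property of $\underline v$ follows symmetrically, with a strict global minimum and the reversed monotonicity inequality.

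I would then treat the terminal slice $t = T$. Since the scheme imposes $v^\delta(T, \cdot) = \tilde g(T, \cdot)$ exactly, $\overline v(T, \cdot) \ge \tilde g(T, \cdot) \ge \underline v(T, \cdot)$; carrying the sub-/supersolution analysis up to $t = T$ as in the treatment of boundary conditions in \cite{barlesjakobsen} identifies the terminal behaviour of $\overline v$ and $\underline v$ in the relaxed sense required by the comparison principle and yields $\overline v(T, \cdot) \le \underline v(T, \cdot)$. Thus $\overline v$ is a bounded subsolution, $\underline v$ a bounded supersolution, and $\overline v(T) \le \underline v(T)$, so the comparison principle (Theorem~2 of \cite{carteawang}) gives $\overline v \le \underline v$ on $\overline{\Omega}$; combined with $\underline v \le \overline v$ this forces $\overline v = \underline v =: v$. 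This common function is continuous, is a viscosity solution of (\ref{hjb}) (being simultaneously sub- and supersolution), and is the unique such solution by a further application of the comparison principle; by \cite{carteawang} it coincides with the value function. Finally, $\overline v = \underline v = v$ with $v$ continuous is precisely locally uniform convergence $v^\delta \to v$ on $\overline{\Omega}$, which is the assertion.

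The half-relaxed-limits machinery (the first two steps and the final conclusion) is routine once monotonicity, stability, consistency and comparison are in hand, and the nonlocal operators $\cM^z$, $\cI^{\pm}$ cause no difficulty because the domination $v^{\delta_m} \le \varphi + \xi_m$ holds globally on the grid. The main obstacle is the terminal condition: $\tilde g(T, \cdot)$ need not satisfy the obstacle constraint $\tilde g(T,\cdot) \ge \cM^z \tilde g(T,\cdot)$ for inventories of large modulus, so one cannot expect $\overline v(T, \cdot)$ and $\underline v(T, \cdot)$ both to equal $\tilde g(T, \cdot)$ pointwise, and making the application of the comparison principle rigorous at $t = T$ requires the generalized viscosity formulation of the terminal condition together with the precise statement of Theorem~2 of \cite{carteawang}.
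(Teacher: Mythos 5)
Your proposal follows essentially the same route as the paper: define the upper/lower relaxed limits $\overline v, \underline v$, use Proposition~\ref{stability} for finiteness, use monotonicity (Proposition~\ref{monotonicity}) plus consistency (Proposition~\ref{consistency}) in the standard test-function argument to show $\overline v$ is a subsolution and $\underline v$ a supersolution, then invoke the comparison principle to conclude $\overline v = \underline v$ and hence locally uniform convergence. The one place where you go further than the paper is the terminal slice: the paper simply asserts $\underline v(T,\cdot) = \overline v(T,\cdot)$ without argument, whereas you correctly observe that this equality is not automatic from $v^\delta(T,\cdot)=\tilde g(T,\cdot)$ alone (the relaxed limits admit interior sequences $t' < T$), and that a rigorous treatment needs the generalized viscosity formulation of the terminal condition as in \cite{barlesjakobsen} together with the precise form of the comparison theorem in \cite{carteawang}; this is a genuine gap in the paper's proof that your proposal identifies and addresses.
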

\begin{proof}
For $(t, \alpha, q) \in \overline{\Omega}$, define
\begin{equation*}
    \underline{v}(t, \alpha, q) = \liminf_{\substack{(\delta t, \delta \alpha) \to (0, 0)\\ (t', \alpha') \to (t, \alpha)}} v^{\delta t, \delta \alpha}(t', \alpha', q),
    \quad
    \overline{v}(t, \alpha, q) = \limsup_{\substack{(\delta t, \delta \alpha) \to (0, 0)\\ (t', \alpha') \to (t, \alpha)}} v^{\delta t, \delta \alpha}(t', \alpha', q).
\end{equation*}
By definition, $\underline{v} \leq \overline{v}$. Moreover, $\underline{v}(T, \alpha, q) = \overline{v}(T, \alpha, q)$ for all $\alpha, q \in \mathbb{R} \times ([-\overline{Q}, \overline{Q}] \cap \mathbb{Z})$.  
The boundedness of these functions follows from the stability result (Proposition \ref{stability}).  
Hence, to establish the reverse inequality $\underline{v} \geq \overline{v}$, it suffices—by applying the comparison principle—to verify that $\underline{v}$ is a viscosity supersolution and that $\overline{v}$ is a viscosity subsolution.  
We prove the first statement; the second follows by symmetry.

Let $\varphi \in C_b^{1, 2}\big(\overline{\Omega}\big)$ and suppose that $(\tilde{t}, \tilde{\alpha}, \tilde{q})$ is a global minimum point of $\underline{v} - \varphi$.  
Without loss of generality, assume that the minimum is strict and that $\underline{v}(\tilde{t}, \tilde{\alpha}, \tilde{q}) = \varphi(\tilde{t}, \tilde{\alpha}, \tilde{q})$.

Then there exist sequences $\delta_k = (\delta t_k, \delta \alpha_k) \in \mathbb{R}^+ \times \mathbb{R}^+$ and $(t_k, \alpha_k, q_k) \in \overline{\Omega}$ such that, as $k \to \infty$,
\begin{equation*}
    (\delta t_k, \delta \alpha_k) \to (0, 0), \quad (t_k, \alpha_k, q_k) \to (\tilde{t}, \tilde{\alpha}, \tilde{q}), \quad v^{\delta_k}(t_k, \alpha_k, q_k) \to \underline{v}(\tilde{t}, \tilde{\alpha}, \tilde{q}),
\end{equation*}
and $v^{\delta_k} - \varphi$ attains a global minimum at $(t_k, \alpha_k, q_k)$.

Let $\xi_k = v^{\delta_k}(t_k, \alpha_k, q_k) - \varphi(t_k, \alpha_k, q_k)$.  
Then $v^{\delta_k}(t, \alpha, q) \geq \varphi(t, \alpha, q) + \xi_k$ for all points $(t, \alpha, q) \in \overline{\Omega}$, with $\xi_k \to 0$ as $k \to \infty$.

By monotonicity (Proposition \ref{monotonicity}) of the scheme $S$, we have
\begin{eqnarray*}
    0 &=& S\Big(\delta_k, (t_k, \alpha_k, q_k), v^{\delta_k}(t_k, \alpha_k, q_k), [v^{\delta_k}]_{(t_k, \alpha_k, q_k)}\Big) 
    \\
   & \geq& S\Big(\delta_k, (t_k, \alpha_k, q_k), v^{\delta_k}(t_k, \alpha_k, q_k), [\varphi + \xi_k]_{(t_k, \alpha_k, q_k)}\Big) 
    \\
   & = &S\Big(\delta_k, (t_k, \alpha_k, q_k), \varphi(t_k, \alpha_k, q_k) + \xi_k, [\varphi + \xi_k]_{(t_k, \alpha_k, q_k)}\Big).
\end{eqnarray*}
%\begin{gather*}
%    0 = S\Big(\delta_k, (t_k, \alpha_k, q_k), v^{\delta_k}(t_k, \alpha_k, q_k), [v^{\delta_k}]_{(t_k, \alpha_k, q_k)}\Big) 
%    \\
%    \geq S\Big(\delta_k, (t_k, \alpha_k, q_k), v^{\delta_k}(t_k, \alpha_k, q_k), [\varphi + \xi_k]_{(t_k, \alpha_k, q_k)}\Big) 
%    \\
%    = S\Big(\delta_k, (t_k, \alpha_k, q_k), \varphi(t_k, \alpha_k, q_k) + \xi_k, [\varphi + \xi_k]_{(t_k, \alpha_k, q_k)}\Big).
%\end{gather*}
Thus,
\begin{equation*}
    S\Big(\delta_k, (t_k, \alpha_k, q_k), \varphi(t_k, \alpha_k, q_k) + \xi_k, [\varphi + \xi_k]_{(t_k, \alpha_k, q_k)}\Big) \leq 0.
\end{equation*}
Applying the consistency property (Proposition \ref{consistency}) and taking the limit as $k \to \infty$, we obtain
\begin{equation*}
    \max\Bigg(
    \sup\limits_{l_n^{\operatorname{a}}, l_n^{\operatorname{b}} \in \{0, 1\}} \bigg(\frac{\partial \varphi}{\partial t} + L^{l^{\operatorname{a}}, l^{\operatorname{b}}} \varphi + \Tilde{f}^{l^{\operatorname{a}}, l^{\operatorname{b}}} \bigg),
    \sup\limits_{z\in\{1, -1\}} \Big( \cM^z \varphi - \varphi \Big)
    \Bigg) \leq 0.
\end{equation*}
Therefore, $\underline{v}$ is a viscosity supersolution and $\overline{v}$ is a viscosity subsolution.  
By the comparison principle and the definitions of $\underline{v}$ and $\overline{v}$, it follows that $\underline{v} \equiv \overline{v}$, and this common function is the unique viscosity solution of equation (\ref{hjb}).
\end{proof}

%--- Section ---%
\section{Numerical experiment}

We solve the Hamilton--Jacobi--Bellman Quasi-Variational Inequality numerically with the following parameter values:
\begin{gather*}
    T=10, \ A=300, \ \overline{Q} = 4, \ \sigma=0.01, \ \theta=0.1,
    \\
    \Delta = 0.005, \ \e = 0.005, \ \lambda^{\operatorname{a}}=\lambda^{\operatorname{b}}=1, \ k=200,
    \\
    \rho=1, \ \gamma^{\operatorname{a}}=\gamma^{\operatorname{b}}=60, \ \phi=10^{-6}, \ \psi =0.
\end{gather*}
Let the uniform grid in $\alpha$ consist of $N_\alpha=101$ points, and let the uniform time grid contain $N=200$ points.

\begin{figure}[h]
    \begin{center}
        \begin{minipage}[h]{0.45\linewidth}
            \includegraphics[width=1\linewidth]{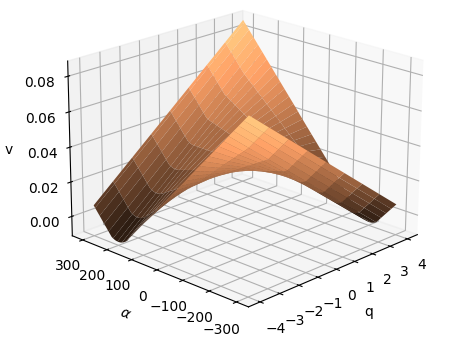}
            \caption{Surface of the value function $v$ at time $t=0$.}
            \label{pic:value_function}
        \end{minipage}
        \hfill
        \begin{minipage}[h]{0.5\linewidth}
            \includegraphics[width=1\linewidth]{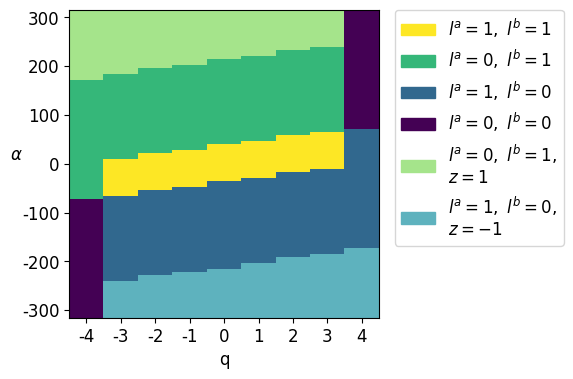}
            \caption{Optimal control at time $t=0$.}
            \label{pic:optimal_control}
        \end{minipage}
    \end{center}
\end{figure}

Figure~\ref{pic:value_function} shows the computed surface of the value function $v$ at the initial time.  
The corresponding optimal control is displayed in Figure~\ref{pic:optimal_control}.  
From these results it can be observed that when the alpha signal is close to zero, the market maker quotes both bid and ask limit orders.  
As the alpha signal increases (or decreases), the market maker begins to trade only on the buy (or sell) side.  
For large values of the alpha signal, the market maker uses both limit and market orders to exploit predictable price movements efficiently.

%--- Section ---%
\section{Conclusion}

This work investigates an implicit numerical method for solving the Hamilton--Jacobi--Bellman Quasi-Variational Inequality arising in combined stochastic and impulse control problem for a market maker.  
The proposed approach avoids time-step restrictions due to its unconditional stability.

We established the convergence of the policy iteration algorithm to the solution of the discrete scheme at each time step.  
Furthermore, the convergence of the implicit numerical solution to the unique viscosity solution of the Hamilton--Jacobi--Bellman Quasi-Variational Inequality was proved.  
To this end, we verified the monotonicity, stability, and consistency properties of the numerical scheme.

A numerical experiment was conducted, and the results illustrate the shape of the value function and the structure of the optimal control at the initial time.

%--- Section ---%
\section*{Acknowledgements}
The author is thankful to Yuri Kabanov for the attention to his work.

%-------------------------------------------
% References
%-------------------------------------------

\renewcommand{\refname}{References}

\end{document}